\documentclass[a4paper, USenglish, cleveref, autoref, thm-restate, numberwithinsect]{lipics-v2021}
\usepackage{marginnote, fontawesome5, dsfont}
\usepackage[dvipsnames]{xcolor}
\usepackage{tikz}
\usepackage[disable]{simplenotes}
\usetikzlibrary{positioning, arrows.meta, fit}

\pdfoutput=1 

\title{ZFLean: a framework for set-level mathematics in Lean} 

\author{Vincent Tr\'elat}{Université de Lorraine, CNRS, INRIA, Nancy, France}{}{0009-0006-4143-3939}{ANR project BLaSST (ANR-21-CE25-0010).}

\authorrunning{V. Trélat}

\Copyright{Vincent Trélat} 

\ccsdesc[500]{Theory of computation~Logic and verification}

\keywords{Set theory, ZFC, Lean, Theorem proving} 

\nolinenumbers 

\definecolor{commentcolor}{named}{Gray}
\definecolor{bodycolor}{named}{Black}
\definecolor{identifiercolor}{named}{Blue}
\colorlet{keywordcolor}{Orange!85!Red}
\colorlet{sortcolor}{Orange!85!Red}
\colorlet{tacticcolor}{Orange!85!Red}
\definecolor{symbolcolor}{named}{Black}

\lstdefinelanguage{lean}{%
mathescape=true,
morekeywords=[1]{
import, prelude, protected, private, noncomputable, definition, meta, renaming,
hiding, parameter, parameters, begin, constant, constants,
lemma, variable, variables, theory,
print, theorem, example,
open, as, export, override, axiom, axioms, inductive, with,
structure, record, universe, universes,
alias, help, precedence, reserve, declare_trace, add_key_equivalence,
match, infix, infixl, infixr, notation, postfix, prefix, instance,
eval, reduce, check, end, this,
using, using_well_founded, namespace, section,
attribute, local, set_option, extends, include, omit, class,
raw, replacing,
calc, have, show, suffices, by, in, at, let, forall, Pi, fun,
exists, if, dif, then, else, assume, obtain, from, register_simp_ext, unless, break, continue,
mutual, do, def, run_cmd, const,
partial, mut, where, macro, syntax, deriving,
return, try, catch, for, macro_rules, declare_syntax_cat, abbrev},
morekeywords=[2]{Type, Prop},
morekeywords=[2]{List, String, Int, Bool},
morekeywords=[3]{zrel, zfun, zpfun,
Cond, or_else, then, try, when, assumption, eassumption, rapply,
apply, rename, intro, intros, all_goals, fold, focus, focus_at,
generalize, generalizes, clear, clears, revert, reverts, back, beta, done, exact,
refine, repeat, whnf, rotate, rotate_left, rotate_right, inversion, cases, rewrite, rw,
xrewrite, krewrite, blast, simp, esimp, unfold, change, check_expr, contradiction,
exfalso, split, existsi, constructor, fconstructor, left, right, injection, congruence, reflexivity,
symmetry, transitivity, state, induction, induction_using, fail, append,
substvars, now, with_options, with_attributes, with_attrs, note, ring
},
otherkeywords={
@[persistent], @[notation], @[visible], @[instance], @[trans_instance],
@[class], @[parsing-only], @[coercion], @[unfold_full], @[constructor],
@[reducible], @[irreducible], @[semireducible], @[quasireducible], @[wf],
@[whnf], @[multiple_instances], @[none], @[decl], @[declaration],
@[relation], @[symm], @[subst], @[refl], @[trans], @[simp], @[congr], @[unify],
@[backward], @[forward], @[no_pattern], @[begin_end], @[tactic], @[abbreviation],
@[reducible], @[unfold], @[alias], @[eqv], @[intro], @[intro!], @[elim], @[grinder],
@[localrefinfo], @[recursor], @[zrel], @[zfun], @[zpfun], @[cases_eliminator], @[induction_eliminator]
},
literate=
{α}{{\textcolor{identifiercolor}{\ensuremath{\mathrm{\alpha}}}}}1
{β}{{\textcolor{identifiercolor}{\ensuremath{\mathrm{\beta}}}}}1
{γ}{{\textcolor{identifiercolor}{\ensuremath{\mathrm{\gamma}}}}}1
{δ}{{\textcolor{identifiercolor}{\ensuremath{\mathrm{\delta}}}}}1
{ε}{{\textcolor{identifiercolor}{\ensuremath{\mathrm{\varepsilon}}}}}1
{ζ}{{\textcolor{identifiercolor}{\ensuremath{\mathrm{\zeta}}}}}1
{η}{{\textcolor{identifiercolor}{\ensuremath{\mathrm{\eta}}}}}1
{θ}{{\textcolor{identifiercolor}{\ensuremath{\mathrm{\theta}}}}}1
{ι}{{\textcolor{identifiercolor}{\ensuremath{\mathrm{\iota}}}}}1
{κ}{{\textcolor{identifiercolor}{\ensuremath{\mathrm{\kappa}}}}}1
{μ}{{\textcolor{identifiercolor}{\ensuremath{\mathrm{\mu}}}}}1
{ν}{{\textcolor{identifiercolor}{\ensuremath{\mathrm{\nu}}}}}1
{ξ}{{\textcolor{identifiercolor}{\ensuremath{\mathrm{\xi}}}}}1
{π}{{\textcolor{identifiercolor}{\ensuremath{\mathrm{\mathnormal{\pi}}}}}}1
{ρ}{{\textcolor{identifiercolor}{\ensuremath{\mathrm{\rho}}}}}1
{σ}{{\textcolor{identifiercolor}{\ensuremath{\mathrm{\sigma}}}}}1
{τ}{{\textcolor{identifiercolor}{\ensuremath{\mathrm{\tau}}}}}1
{φ}{{\textcolor{identifiercolor}{\ensuremath{\mathrm{\varphi}}}}}1
{χ}{{\textcolor{identifiercolor}{\ensuremath{\mathrm{\chi}}}}}1
{ψ}{{\textcolor{identifiercolor}{\ensuremath{\mathrm{\psi}}}}}1
{ω}{{\textcolor{identifiercolor}{\ensuremath{\mathrm{\omega}}}}}1
{Γ}{{\textcolor{identifiercolor}{\ensuremath{\mathrm{\Gamma}}}}}1
{Δ}{{\textcolor{identifiercolor}{\ensuremath{\mathrm{\Delta}}}}}1
{Θ}{{\textcolor{identifiercolor}{\ensuremath{\mathrm{\Theta}}}}}1
{Λ}{{\textcolor{identifiercolor}{\ensuremath{\mathrm{\Lambda}}}}}1
{Σ}{{\textcolor{identifiercolor}{\ensuremath{\mathrm{\Sigma}}}}}1
{Φ}{{\textcolor{identifiercolor}{\ensuremath{\mathrm{\Phi}}}}}1
{Ξ}{{\textcolor{identifiercolor}{\ensuremath{\mathrm{\Xi}}}}}1
{Ψ}{{\textcolor{identifiercolor}{\ensuremath{\mathrm{\Psi}}}}}1
{Ω}{{\ensuremath{\mathrm{\Omega}}}}1
{ℵ}{{\ensuremath{\aleph}}}1
{≤}{{\ensuremath{\leq}}}1
{≥}{{\ensuremath{\geq}}}1
{≠}{{\ensuremath{\neq}}}1
{≈}{{\ensuremath{\approx}}}1
{≡}{{\ensuremath{\equiv}}}1
{≃}{{\ensuremath{\simeq}}}1
{≤}{{\ensuremath{\leq}}}1
{≥}{{\ensuremath{\geq}}}1
{∂}{{\ensuremath{\partial}}}1
{∆}{{\ensuremath{\triangle}}}1 
{∫}{{\ensuremath{\int}}}1
{∑}{{\ensuremath{\mathrm{\Sigma}}}}1
{Π}{{\ensuremath{\mathrm{\Pi}}}}1
{⊥}{{\ensuremath{\bot}}}1
{⊤}{{\ensuremath{\top}}}1
{∞}{{\ensuremath{\infty}}}1
{∂}{{\ensuremath{\partial}}}1
{∓}{{\ensuremath{\mp}}}1
{±}{{\ensuremath{\pm}}}1
{×}{{\ensuremath{\times}}}1
{⊕}{{\ensuremath{\oplus}}}1
{⊗}{{\ensuremath{\otimes}}}1
{⊞}{{\ensuremath{\boxplus}}}1
{∇}{{\ensuremath{\nabla}}}1
{√}{{\ensuremath{\sqrt}}}1
{⬝}{{\ensuremath{\cdot}}}1
{•}{{\ensuremath{\cdot}}}1
{∘}{{\ensuremath{\circ}}}1
{⁻}{{\ensuremath{^{-}}}}1
{▸}{{\ensuremath{\blacktriangleright}}}1
{∧}{{\ensuremath{\wedge}}}1
{⋀}{{\ensuremath{\bigwedge}}}1
{∨}{{\ensuremath{\vee}}}1
{⋁}{{\ensuremath{\bigvee}}}1
{¬}{{\ensuremath{\neg}}}1
{⊢}{{\ensuremath{\vdash}}}1
{⟨}{{\ensuremath{\langle}}}1
{⟩}{{\ensuremath{\rangle}}}1
{↦}{{\ensuremath{\mapsto}}}1
{→}{{\ensuremath{\rightarrow}}}1
{↔}{{\ensuremath{\leftrightarrow}}}1
{⇒}{{\ensuremath{\Rightarrow}}}1
{⟹}{{\ensuremath{\Longrightarrow}}}1
{⇐}{{\ensuremath{\Leftarrow}}}1
{⟸}{{\ensuremath{\Longleftarrow}}}1
{∩}{{\ensuremath{\cap}}}1
{∪}{{\ensuremath{\cup}}}1
{⊂}{{\ensuremath{\subset}}}1
{⊆}{{\ensuremath{\subseteq}}}1
{⊄}{{\ensuremath{\nsubseteq}}}1
{⊈}{{\ensuremath{\nsubseteq}}}1
{⊃}{{\ensuremath{\supseteq}}}1
{⊇}{{\ensuremath{\supseteq}}}1
{⊅}{{\ensuremath{\nsupseteq}}}1
{⊉}{{\ensuremath{\nsupseteq}}}1
{∈}{{\ensuremath{\in}}}1
{∉}{{\ensuremath{\notin}}}1
{∋}{{\ensuremath{\ni}}}1
{∌}{{\ensuremath{\notni}}}1
{∅}{{\ensuremath{\varnothing}}}1
{∖}{{\ensuremath{\setminus}}}1
{†}{{\ensuremath{\dag}}}1
{ℕ}{{\ensuremath{\mathbb{N}}}}1
{ℤ}{{\ensuremath{\mathbb{Z}}}}1
{ℝ}{{\ensuremath{\mathbb{R}}}}1
{ℚ}{{\ensuremath{\mathbb{Q}}}}1
{ℂ}{{\ensuremath{\mathbb{C}}}}1
{𝔹}{{\ensuremath{\mathbb{B}}}}1
{⌞}{{\ensuremath{\llcorner}}}1
{⌟}{{\ensuremath{\lrcorner}}}1
{⦃}{{\ensuremath{\{\!|}}}1
{⦄}{{\ensuremath{|\!\}}}}1
{₁}{{\textcolor{identifiercolor}{\ensuremath{_1}}}}1
{₂}{{\textcolor{identifiercolor}{\ensuremath{_2}}}}1
{₃}{{\textcolor{identifiercolor}{\ensuremath{_3}}}}1
{₄}{{\textcolor{identifiercolor}{\ensuremath{_4}}}}1
{₅}{{\textcolor{identifiercolor}{\ensuremath{_5}}}}1
{₆}{{\textcolor{identifiercolor}{\ensuremath{_6}}}}1
{₇}{{\textcolor{identifiercolor}{\ensuremath{_7}}}}1
{₈}{{\textcolor{identifiercolor}{\ensuremath{_8}}}}1
{₉}{{\textcolor{identifiercolor}{\ensuremath{_9}}}}1
{₀}{{\textcolor{identifiercolor}{\ensuremath{_0}}}}1
{¹}{{\textcolor{identifiercolor}{\ensuremath{^1}}}}1
{²}{{\textcolor{identifiercolor}{\ensuremath{^2}}}}1
{ᶻ}{{\ensuremath{^\mathsf{z}}}}1
{ₙ}{{\ensuremath{_n}}}1
{ₘ}{{\ensuremath{_m}}}1
{↑}{{\ensuremath{\uparrow}}}1
{↓}{{\ensuremath{\downarrow}}}1
{▸}{{\ensuremath{\triangleright}}}1
{Σ}{{\textcolor{symbolcolor}{\ensuremath{\Sigma}}}}1
{Π}{{\textcolor{symbolcolor}{\ensuremath{\Pi}}}}1
{∀}{{\textcolor{symbolcolor}{\ensuremath{\forall}}}}1
{∃}{{\textcolor{symbolcolor}{\ensuremath{\exists}}}}1
{λ}{{\textcolor{symbolcolor}{\ensuremath{\mathrm{\lambda}}}}}1
{⁻¹}{{\ensuremath{^{-1}}}}2
{𝟙}{{\ensuremath{\mathds{1}}}}1
{≅}{{\ensuremath{\cong}}}1
{↪}{{\ensuremath{\hookrightarrow}}}1
{·}{{\ensuremath{\cdot}}}1,
morecomment=[s][\color{commentcolor}]{/-}{-/},
morecomment=[l][\itshape \color{commentcolor}]{--},
showspaces=false,
showstringspaces=false,
keepspaces=true,
morestring=[b]",
morestring=[d],
tabsize=2,
sensitive=true,
breaklines=true,
basicstyle={\small\ttfamily},
captionpos=t,
columns=[l]fullflexible,
identifierstyle={\ttfamily\color{identifiercolor}},
keywordstyle=[1]{\ttfamily\itshape\color{keywordcolor}},
keywordstyle=[2]{\ttfamily\color{sortcolor}},
keywordstyle=[3]{\ttfamily\bfseries\color{tacticcolor}},
stringstyle={\ttfamily},
commentstyle={\ttfamily\footnotesize},
backgroundcolor=\color{lightgray!20},
}

\makeatletter
\newcommand\marker[3]{%
    \ifmmode\let\marker@f\text\else\def\marker@f##1{##1}\fi%
    \marker@f{%
        \@ifpackageloaded{marginnote}{%
            \marginnote{\textcolor{#2!90!black}{\footnotesize\faSquare}}}{}%
        \raisebox{.01em}{\colorbox{#2!90!black}{\textcolor{white}{\textsc{#1}}}}%
        \ifx&#3&\relax\else\textcolor{#2}{\textsc{\ #3}}\fi}%
}
\makeatother

\newcommand{\textlean}{%
  \raisebox{-2.5pt}{%
    \mbox{%
      \kern.2em\relax
      \includegraphics[height=1.6\fontcharht\font`X]{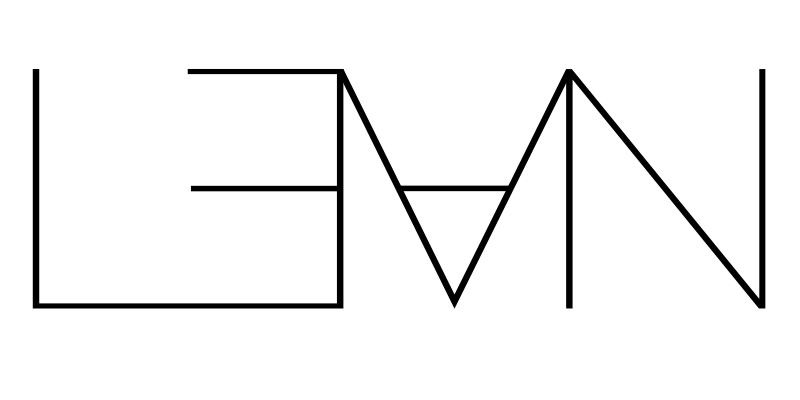}%
      \kern.2em\relax
    }%
  }%
}

\newcommand{\defref}[1]{%
  \hyperref[leandef:#1]{(\faReply~Def.~\ref{leandef:#1})}%
}

\newcommand{\omitproof}{\ensuremath{\cdot\mkern-5.5mu\cdot\mkern-5.5mu\cdot}}

\newcommand\IsFunc{\mathsf{IsFunc}}

\def\Dom{\mathsf{Dom}}
\def\Ran{\mathsf{Range}}
\newcommand{\Image}[2]{#1[#2]}
\def\IsFunc{\mathsf{IsFunc}}
\def\IsPFunc{\mathsf{IsPFunc}}
\def\IsInjective{\mathsf{IsInjective}}
\def\IsSurjective{\mathsf{IsSurjective}}
\def\IsBijective{\mathsf{IsBijective}}

\newcommand{\Id}[1]{\mathds{1}_{#1}}
\newcommand{\fapplysym}{\ensuremath{\raisebox{-0.25ex}{@}^\mathsf{z}}}
\newcommand{\fapply}[2]{\ensuremath{\fapplysym\mkern-1.5mu#1}(#2)}

\renewcommand\implies{\Rightarrow}

\newcommand\zflean{ZFLean\@}

\newcommand{\zunop}[1]{\mathop{\ensuremath{\mathsf{#1}^{\mathsf{z}}}}}
\newcommand{\zbinop}[1]{\mathbin{\ensuremath{\mathsf{#1}^{\mathsf{z}}}}}
\newcommand\zlambda[4]{%
\ensuremath{%
\begin{array}{ccccc}
\zunop{\lambda} &: &#1 &\to &#2\\
            &\mid &#3 &\mapsto &#4
\end{array}%
}}

\newcommand{\inlinelean}{\lstinline[language=lean, basicstyle=\ttfamily]}

\begin{document}

\maketitle

\begin{abstract}
We present \zflean, a Lean~4 library for doing core mathematics inside a model of ZFC with the ergonomics expected of typed Mathlib developments.
Building on Mathlib's ZFC model, we contribute a relational calculus for sets with rewriting hints and small predictable tactics, canonical set‑theoretic constructions---Booleans, naturals, integers, sums/option---and bridges between ZFC objects and Lean's native types enabling mixed set‑level/typed proofs.
The layer reduces boilerplate for extensional reasoning while remaining compatible with vanilla Mathlib.
We discuss library organization and usage patterns that lower the friction of set-theoretic formalization in a dependently typed assistant.
We demonstrate typical use of the framework with a case study exercising our constructions and relational calculus through a proof of an isomorphism theorem on curried functions.
\end{abstract}

\section{Introduction}

Modern mathematical formalization is increasingly carried out with proof assistants, whose foundational cores range from first-order set theories to more expressive systems based on simple or dependent type theory.
This evolution was fueled by the \emph{proofs-as-objects} vision of the Curry-Howard correspondence, and it crystallized into two broad and often contrasted families of foundations: \emph{collections}-based theories (set-theoretic) and \emph{constructions}-based theories (type-theoretic).
Various proposals aim to reconcile these viewpoints, either by generalizing over them---e.g.\ category theory---or by refining them---e.g.\ constructive set theories such as CZF~\cite{aczel-czf}.

Nevertheless, formalizing \emph{set-level} mathematics is sometimes desirable---for extensional equalities, partiality via relations, or category-style ``up to isomorphism'' reasoning.
However, working directly at set level in a typed assistant can be verbose and brittle: partial maps require encodings; extensionality must be disciplined; and crossing the boundary to native types is manual and error-prone. As a result, ZFC models are more often referenced than \emph{used} for development.
We therefore leverage Lean's Mathlib~\cite{mathlib} and its provided model of ZFC and develop a framework for doing mathematics in ZFC with Lean-grade ergonomics.
Our aim is \emph{not} to replace Mathlib's native arithmetic or data types; instead, we engineer a uniform \emph{relational calculus} (with partial automation), provide \emph{canonical ZFC constructions} with \emph{universal properties}, and build \emph{bridges} to Lean~4 types, so proofs can fluidly toggle between extensional sets and typed infrastructure.
We contribute the following:
\begin{itemize}
  \item \emph{a relational calculus}: Composition/converse/images with associativity/identity and image, composition interaction laws, normal forms, and rewriting hints feeding partial automation.
  \item \emph{canonical constructions with usual properties}: $\mathbb{B}$, $\mathbb{N}$, $\mathbb{Z}$, $\mathbb{Q}$, functions, sums, options.
  \item \emph{an interoperable framework with Lean~4/Mathlib}: Bridges allowing to step across boundaries smoothly, enabling ``up to isomorphism'' reasoning and transport of properties.
\end{itemize}
All definitions and theorems mentioned in this paper are available in the artifacts, as well as full proofs omitted in the text.

We first recall Mathlib's ZFC model and design choices in \cref{sec:background}.
In \cref{sec:rel} we present the relational calculus with automation tactics for relations and (partial) functions.
Using this calculus, we build canonical constructions along with their usual associated properties in \cref{sec:canon}.
In \cref{subsec:iso} we formalize embeddings and isomorphisms, along with structure theorems and transport patterns.
\cref{sec:case-study} then demonstrates usability of the framework via selected case studies.
We also discuss related work in \cref{sec:related} before concluding.

\section{Background and Design Choices}
\label{sec:background}

We rely on Mathlib's implementation of a model of ZFC set theory~\cite{mathlib-zfc}, which we briefly recall here.
The construction starts from a universe-polymorphic type of \emph{pre-sets} \inlinelean|PSet| and its extensional quotient, so as to obtain the intended set-theoretic model.
We briefly recall central definitions: extensional equivalence $X \sim Y$ and membership $x \in X$, which witness the axioms of extensionality and regularity in the model.

\subsection{Pre-sets}

Pre-sets are defined inductively as universe-polymorphic structures, as follows.

\begin{definition}
  For any type $\alpha$ in a universe of level $u$%
  \footnote{$\mathsf{Type}\,u$ denotes Lean's universe level $u$ in a cumulative hierarchy where the parameter $u$ makes the construction universe-polymorphic.}
  and any indexed family $A$ of pre-sets over $\alpha$, a pre-set can be constructed from $\alpha$ and $A$.
  This provides an inductive (intensional) definition of pre-sets via a constructor of type $\prod_{\alpha \colon \mathsf{Type}\, u} (\alpha \to \mathsf{PSet}) \to \mathsf{PSet}$.
  The corresponding Lean definition is shown below.
  \begin{lstlisting}[language=lean, basicstyle=\normalfont\ttfamily]
inductive PSet : Type (u + 1)
  | mk (α : Type u) (A : α → PSet) : PSet
\end{lstlisting}
\end{definition}

Constructing a pre-set is therefore a cumulative process, a commonly observed fact about sets: (pre-)sets are built from other (pre-)sets.
With such a definition, one can inductively construct a pre-set upon an underlying type $\alpha$ by providing an interpretation function of type $\alpha \to \mathsf{PSet}$ of its elements as pre-sets.
Since we need a starting point to populate the type of pre-sets, we construct the empty pre-set. This shows that the type of pre-sets is \emph{inhabited}.

\begin{definition}
  The empty pre-set is obtained from the (universe polymorphic) empty type $\bot$ and its elimination function $\mathsf{elim}_\bot : \prod_\alpha \bot \to \alpha$
  \begin{equation*}
    \varnothing \triangleq \mathsf{PSet.mk}\ \bot\ \mathsf{elim}_\bot
  \end{equation*}
\end{definition}

Operations on pre-sets are then defined with the aim of being later lifted to sets.
It is noticeable that some of these operations will serve as witnesses of the ZFC axioms.
We only recall principal operations here, such as extensional equivalence and membership.

\begin{definition}
  Two pre-sets $X =: \langle \alpha, A \rangle$ and $Y =: \langle \beta, B \rangle$ are said to be extensionally equivalent---or to have the same extension---denoted by $X \sim Y$, if every element of $A$ is (inductively) extensionally equivalent to some element of $B$ and vice-versa:
  \begin{equation*}
    X \sim Y \triangleq \left(\forall\, a,\, \exists\, b,\, A(a) \sim B(b)\right) \land \left(\forall\, b,\, \exists\, a,\, A(a) \sim B(b)\right)
  \end{equation*}
\end{definition}
Extensional equivalence is then shown to be an equivalence relation on pre-sets, which allows us to instantiate a setoid structure on pre-sets and serves as a witness for the axiom of extensionality.
It is also used to define pre-set membership, as follows.
\begin{definition}
  A pre-set $x$ is said to be a member of a pre-set $X =: \langle \alpha, A \rangle$, denoted by $x \in X$, if it is extensionally equivalent to some element of $X$:
    \begin{equation*}
    x \in X \triangleq \exists\, a,\, x \sim A(a)
  \end{equation*}
\end{definition}

The membership relation is then shown to be well-founded---meaning there is no infinite $\in$-chain of pre-sets---a key property that witnesses the axiom of regularity.
Further usual operations such as cartesian product ($\times$), union ($\cup$), and powerset ($\mathcal{P}$), are defined on pre-sets, all to be later lifted to sets.

\subsection{Sets}

Sets are defined as the extensional quotient of pre-sets, as follows.
\begin{definition}
The type of sets is defined as the quotient type $\mathsf{ZFSet} \triangleq \mathsf{PSet}/{\sim}$.
\end{definition}

Since Lean has built-in support for quotient types, the definition of sets in Lean is straightforward.
Operations on sets are then defined by lifting the corresponding operations on pre-sets via the quotient map.

\begin{remark}
The construction validates the full ZFC axioms, including choice: in \zflean{} we only rely on choice to define noncomputable selectors such as function evaluation on partial functions.
This definition already marks the divergence between \emph{intensional} and \emph{extensional} theories.
ZFC sets are indeed constrained by a notion of equality that is coarser than \emph{definitional} equality, which is fundamentally used in many systems, among which Lean, Rocq and Agda.
Indeed, although most proof assistants are based on intensional type systems, some like F$^\star$~\cite{fstar} treat provably equal terms like definitionally equal ones.
The purpose of our development is exactly to alleviate the friction caused by this divergence by preventing repeated quotient-lifting boilerplate: proof scripts stay at the \inlinelean|ZFSet| interface, and \emph{extensional} reasoning inside Lean is enabled while retaining compatibility with its \emph{intensional} core.
\end{remark}

A definition of Kuratowski's ordered pairs is also provided by the model, denoted by $(x, y)$ for any sets $x$ and $y$ and is defined as the set $\displaystyle\bigl\lbrace\lbrace x\rbrace, \lbrace x, y\rbrace\bigr\rbrace$, but lacks projections and simplification lemmas.
We therefore define projections $\pi_1$ and $\pi_2$ accordingly and provide simplification lemmas to manipulate them.
This represents the first contribution of our work.

\begin{definition}
  We define general projections $\pi_1$ and $\pi_2$ for any set $x$:
  \begin{equation*}
    \pi_1\, x \triangleq \bigcup {\bigcap x} \quad\text{and}\quad \pi_2\, x \triangleq
    \begin{cases}
      \pi_1\, x & \text{if}\ \bigcup x \setminus \bigcap x = \varnothing\\
      \bigcup \left( \bigcup x \setminus \bigcap x\right) & \text{otherwise}
    \end{cases}
  \end{equation*}
\end{definition}

\begin{remark}
The projections $\pi_1$ and $\pi_2$ are defined as \emph{total} set-level operators to avoid partiality in the object language;
they are only meant to be used under hypotheses that force $x$ to be a pair, e.g.\ it belongs to a cartesian product $x \in A\times B$, or $x = (a,b)$.
When $x$ is \emph{not} a Kuratowski pair, $\pi_1(x)$ and $\pi_2(x)$ are just some sets with no intended semantic content.
\end{remark}

\begin{lstlisting}[language=lean, float, caption={Projections for Kuratowski pairs along with simplification lemmas.}, label=lst:pair-proj]
theorem π₁_pair (x y : ZFSet) : π₁ (x.pair y) = x := $\omitproof$
theorem π₂_pair (x y : ZFSet) : π₂ (x.pair y) = y := $\omitproof$
theorem pair_eta {z A B : ZFSet} (h : z ∈ A × B) :
  z = z.π₁.pair z.π₂ := $\omitproof$
theorem pair_mem_prod {x y a b : ZFSet} :
  a.pair b ∈ x × y ↔ a ∈ x ∧ b ∈ y := $\omitproof$
\end{lstlisting}

With these projections, we can prove the expected simplification lemmas for ordered pairs, e.g.\ $\pi_1\, (x, y) = x$ and $\pi_2\, (x, y) = y$ for any sets $x$ and $y$, $\eta$-expansion of pairs and membership on a cartesian product, as shown in \cref{lst:pair-proj}---with proofs omitted and replaced by $\omitproof$ but available in the artifacts, with additional equational lemmas.

ZFC set theory has no built-in notion of functions, whose definition must be taken carefully.
In programming languages and type theories, functions are usually first-class and \emph{total} by construction, while mathematical functions can be \emph{partial} and are defined as \emph{functional relations} whose domain may be smaller than the intended source set.
This difference is usually bridged by using options or dependent types to encode partiality, however in this work we aim to stay as close as possible to the set-theoretic definitions.
Also, Mathlib already provides a definition for total functions and exponential objects, which we reuse in our development.
The definition is as follows.
\begin{definition}
  A set $f$ is said to be a total function from a set $A$ to a set $B$, denoted by $\IsFunc(A, B, f)$, if it is a functional relation whose domain is exactly $A$:
  \begin{equation*}
    \IsFunc(A, B, f) \triangleq f \subseteq A \times B \land \forall\, x \in A,\, \exists!\, y \in B, (x, y) \in f
  \end{equation*}
  The exponential object $B^A$ is then defined as the set of all functions from $A$ to $B$:
  \begin{equation*}
    B^A \triangleq \{ f \in \mathcal{P}(A \times B) \mid \IsFunc(A, B, f) \}
  \end{equation*}
\end{definition}

This definition, however, does not cover partial functions, which are ubiquitous in mathematical developments.
This highly motivates the relational calculus presented in the next section.

\section{A Relational Calculus in ZFC}
\label{sec:rel}

Throughout this section we work inside the ZFC model provided by Mathlib and extend it with a calculus for binary relations and partial/total functions.
A \emph{binary relation} $R$ between $A$ and $B$ is a subset $R \subseteq A\times B$.
Our calculus packages standard operations on relations (converse, identity, composition, domain, range, image) together with predicates for partial and total functions, and supplies small \emph{automation tactics} that try to discharge ubiquitous well-formedness side-conditions, so that these objects can be manipulated \emph{as is} without cumbersome boilerplate, as on paper.
Those tactics are called \inlinelean|zrel|, \inlinelean|zpfun|, and \inlinelean|zfun|, and will be described later.

\subsection{Basic definitions}

\begin{definition}[converse, identity, composition]\label{def:rel-basic}
Let $R\subseteq A\times B$ and $S\subseteq B\times C$ be relations.
\begin{itemize}
  \item The \emph{identity relation} $\Id{A}$ on $A$ is defined as:
  \begin{equation*}
    \Id{A} \triangleq \{ x \in A \times A \mid \pi_1(x) = \pi_2(x) \}
  \end{equation*}
  The identity relation is shown to be a total function belonging to $A^A$, and later shown to be a bijection and the neutral element for composition.
  \item The \emph{converse} $R^{-1}$ is defined as:
  \begin{equation*}
    R^{-1} \triangleq \{z \in B\times A \mid \exists\, x\, y,\, x \in A \land y \in B \land z = (y, x) \land (x,y)\in R \}
  \end{equation*}
  This definition depends on the side-condition $R\subseteq A\times B$ (declaring that $R$ is a relation between the implicit sets), an implicit argument that \inlinelean|zrel| attempts to discharge in \zflean.
  \item The \emph{composition} $S\circ R \subseteq A\times C$ is defined as:
  \begin{equation*}
    S \circ R := \{ w \in A\times C \mid \exists\, x\, z,\, x \in A \land z \in C \land w = (x, z) \land \exists\, y \in B,\, (x,y)\in R \land (y, z)\in S \}
  \end{equation*}
  Unlike the previous definition, nothing is enforced on $R$ and $S$: if either $R$ or $S$ is not a relation, the composition is still well-defined, but is empty.
  However, we define \emph{functional composition} $\zbinop{\circ}$ specifically for functions, requiring $R$ and $S$ to be functions, and using the \inlinelean|zfun| tactic to try to discharge the side-conditions automatically.
\end{itemize}
\end{definition}

\begin{definition}[Domain, range, image]\label{def:dom-range-image}
For a relation $R\subseteq A\times B$, its \emph{domain}, \emph{range}, and \emph{image} of a set $X\subseteq A$ are defined as:
\begin{align*}
  \begin{array}{rcl}
  \Dom(R)      &\triangleq &\{ x\in A \mid \exists\, y\in B,\, (x,y)\in R \} \\
  \Ran(R)      &\triangleq &\{ y \in B \mid \exists\, x \in \Dom(R),\, (x,y) \in R \} \\
  \Image{R}{X} &\triangleq &\{ y \in B \mid \exists\, x \in X,\, (x, y) \in R \}
  \end{array}
\end{align*}

Operationally, all three definitions above invoke the \inlinelean|zrel| tactic to discharge the side-condition that $R$ is a relation between $A$ and $B$.
\end{definition}

Finally, we define partial functions as functional relations without totality requirements.
\begin{definition}[Partial functions]
A set $f$ is said to be a partial function from set $A$ to set $B$, written $\mathsf{IsPFunc}(f,A,B)$, if it is \emph{functional}:\footnote{The order of arguments is voluntarily changed compared to $\IsFunc$, so that in Lean, one may write \inlinelean|f.IsPFunc A B|}
  \begin{equation*}
    \IsPFunc(f, A, B) \triangleq f \subseteq A \times B \,\land\, \left(\forall\, x \in A,\, \forall\, y \in B,\, \forall\, z \in B,\, (x,y) \in f \land (x,z) \in f \implies y = z\right)
  \end{equation*}
\end{definition}

We also derive predicates $\IsInjective$, $\IsSurjective$, and $\IsBijective$ in the usual sense, which all invoke the tactic \inlinelean|zfun| to discharge functional totality requirements.

\subsection{Function evaluation and $\lambda$‑abstraction}

We introduce two convenient features for working with functions, namely function application and $\lambda$‑abstraction, defined as follows.
We strive both to keep standard mathematical notation and to automate side-conditions as much as possible.
\begin{itemize}
  \item \emph{function application} that maps $x\in\Dom(f)$ to the unique $y$ such that $(x,y)\in f$ and denoted by $\fapply{f}{x}$, invoking the \inlinelean|zpfun| tactic; we show that evaluation commutes with composition in the expected way: $\fapply{(g \zbinop{\circ} f)}{x}=\fapply{g}{\fapply{f}{x}}$, and $\fapply{f}{x}$ coincides with the unique element of the image $f[\{x\}]$.
  In practice, $\fapplysym$ is a non-computable dependently typed operator that requires a proof that $f$ is a partial function and a proof $h_x$ that $x$ belongs to its domain (that must be provided, there is no automation at this point), denoted $\fapply{f}{\langle x, h_x \rangle}$.
  Its definition makes use of the axiom of choice to select an element $y$ such that $(x,y)\in f$.
  Since $f$ is functional, there is at most one such $y$, and since $x$ belongs to the domain of $f$, there is at least one such $y$, hence such a $y$ exists and is unique.\footnote{This property is proved in \zflean{} as \inlinelean|IsPFunc.exists_unique_of_mem_dom|.}

  \item $\lambda$‑abstraction constructing functions from expressions, written as:
  \begin{equation*}
    \zlambda{A}{B}{x}{e(x)}
    \qquad\text{is the relation}\qquad
    \{ (x,y)\in A\times B \mid y=e(x) \},
  \end{equation*}
\end{itemize}
which is then shown to belong to $B^A$ whenever $e(x)\in B$ for all $x\in A$.
We prove the expected equational properties for this notation, summarized in the following theorem.

\begin{theorem}[Specification, extensionality, $\beta$- and $\eta$-conversion]
  \label{thm:lambda}
  Let $A$, $B$ be sets, $f \in B^A$, and $e,e'$ be unary Lean-level (endo)functions on sets.
  \begin{itemize}
    \item \textbf{Specification:} For any sets $x$ and $y$, the following holds:
    \begin{equation*}
      (x, y) \in \zlambda{A}{B}{x}{e(x)} \iff x \in A \land y \in B \land y = e(x)
    \end{equation*}
    \item \textbf{Extensionality:}
    \begin{equation*}
      \left(\zlambda{A}{B}{x}{e(x)}\right) = \left(\zlambda{A}{B}{x}{e'(x)}\right) \iff \forall\, x \in A,\, e(x) = e'(x)
    \end{equation*}
    \item \textbf{$\beta$-reduction:}
    \begin{equation*}
      \forall\, x \in A, \fapply{\left(\zlambda{A}{B}{x}{e(x)}\right)}{\langle x, \omitproof \rangle} = e(x)
    \end{equation*}
    \item \textbf{$\eta$-expansion:}
    \begin{equation*}
    f = \left(\zlambda{A}{B}{x}{\fapply{f}{\langle x, \omitproof \rangle}}\right)
  \end{equation*}
\end{itemize}
\end{theorem}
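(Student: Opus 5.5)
The four items are proved in order, each resting on the previous ones. Throughout, $\lambda$-abstraction is unfolded as the separation $\{ z \in A\times B \mid \exists\, x\, y,\, z = (x,y) \land y = e(x)\}$, or equivalently with the condition $\pi_2(z) = e(\pi_1(z))$.

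\textbf{Specification.} Rewrite membership in a separation as $(x,y) \in A \times B$ together with the defining predicate. The first conjunct becomes $x \in A \land y \in B$ by \inlinelean|pair_mem_prod| from \cref{lst:pair-proj}. The predicate, instantiated at the pair $(x,y)$, becomes $y = e(x)$ after simplifying with $\pi_1(x,y)=x$ and $\pi_2(x,y)=y$. If the existential form is used instead, we need injectivity of Kuratowski pairs, which also follows from the two projection lemmas.

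\textbf{Extensionality.} For the backward direction, apply set extensionality. Any element of either side lies in $A\times B$, so by \inlinelean|pair_eta| it has the form $(x,y)$. By Specification, membership is then equivalent to $x\in A \land y\in B\land y=e(x)$, and the hypothesis lets us swap $e(x)$ for $e'(x)$. For the forward direction we face a real obstacle: with $B$ fixed, $e(x)$ may lie outside $B$ for some $x\in A$. In that case the pair $(x,e(x))$ belongs to neither relation, and the equality gives no information about $e(x)$ versus $e'(x)$. I would therefore carry the typing hypothesis $\forall x\in A,\, e(x)\in B$, which the surrounding development assumes to place the abstraction in $B^A$, presumably as an implicit argument in the Lean statement. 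Under it, Specification shows $(x,e(x))$ lies in the left-hand relation. Rewriting along the equality places it in the right-hand one, and Specification again yields $e(x)=e'(x)$.

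\textbf{$\beta$-reduction.} Under the same typing hypothesis, the abstraction is in $B^A$, hence is a partial function, and $x\in\Dom$ holds via the witness $e(x)$. The application $\fapply{f}{\langle x,h\rangle}$ is characterised by $(x, \fapply{f}{\langle x,h\rangle}) \in f$, the choice specification used with \inlinelean|IsPFunc.exists_unique_of_mem_dom|. Applying Specification to this membership gives exactly $\fapply{f}{\langle x,h\rangle} = e(x)$.

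\textbf{$\eta$-expansion.} Let $e(x) := \fapply{f}{\langle x, h_x\rangle}$. The dependency on the proof $h_x$ is the main technical nuisance: $e$ is not literally a Lean function on sets. I would handle it with a dite, taking the value if $x\in\Dom f$ and $\varnothing$ otherwise; for $f\in B^A$ we have $\Dom f = A$, so the branch is always taken on $A$. Then argue by extensionality using \inlinelean|pair_eta|, since $f \subseteq A\times B$. If $(x,y)\in f$, functionality of $f$ forces $y = \fapply{f}{\langle x,h_x\rangle}$, so the pair lies in the abstraction by Specification. Conversely, if $(x,y)$ is in the abstraction, then $y$ equals the application, and the application's specification gives $(x,y)\in f$. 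This step, together with producing the domain-membership proofs from $f \in B^A$, is where I expect the most friction.
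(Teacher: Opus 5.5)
Your proposal is correct and follows the paper's approach, which also proves all four items by extensionality on the underlying sets; your use of pair decomposition, the specification lemma, and the choice specification of application spells out that one-line argument faithfully. You are also right that the forward direction of extensionality, and the domain-membership proof elided in the $\beta$-rule, need the typing hypothesis $\forall\, x \in A,\, e(x) \in B$, which the printed statement leaves implicit: with $A \neq \varnothing = B$, both abstractions are empty for any $e, e'$.
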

\begin{proof}
  Proofs are carried out by applying extensionality on the corresponding sets.
  Full proof details are available in the artifacts as proofs of the theorems \inlinelean|lambda_spec|, \inlinelean|lambda_ext_iff|, \inlinelean|fapply_lambda|, and \inlinelean|lambda_eta|.
\end{proof}

Application and abstraction integrate well with rewriting, so that many proofs reduce to algebra on images, evaluation, and composition, as expected.
We illustrate this with a simple example below.

\begin{example}
  \label{ex:func-ext}
Consider the extensionality principle for total functions: two functions $f,g\in B^A$ are equal if they are pointwise equal on $A$.
In \zflean, this is stated as the following theorem:
\begin{lstlisting}[language=lean]
theorem is_func_ext_iff {A B : ZFSet} {f g : ZFSet}
  (hf : IsFunc A B f) (hg : IsFunc A B g) :
    f = g ↔ ∀ x ∈ A, @ᶻf ⟨x, $\omitproof$⟩ = @ᶻg ⟨x, $\omitproof$⟩ := $\omitproof$
\end{lstlisting}
where the first two omitted proofs ($\omitproof$) are proofs that $x$ belongs to the domain of $f$ and $g$ respectively.
The proof of this theorem can be carried out directly using extensionality on sets, although it turns out to be rather tedious.
It is in fact easier to $\eta$-expand both $f$ and $g$ and apply extensionality from \cref{thm:lambda} on the resulting $\lambda$-abstractions.
In \zflean, this amounts to the following rewriting steps:
\begin{lstlisting}[language=lean]
-- hf : IsFunc A B f
-- hg : IsFunc A B g
rw [lambda_eta hf, lambda_eta hg, lambda_ext_iff]
\end{lstlisting}
This ends up in lifting a proof of extensionality for ZFC functions to standard extensionality of Lean functions.
Full details of the proof are available in the artifacts.
\end{example}

\subsection{Automation of side‑conditions}
Manipulating relations in a set‑theoretic style generates repetitive proof obligations regarding relations and partial/total functions.
As already mentioned, the library provides three lightweight tactics whose purpose is to try to solve such obligations \emph{structurally}:
\begin{itemize}
  \item \inlinelean|zrel|: relational goals $R\subseteq A\times B$;
  \item \inlinelean|zpfun|: partial functionality goals $\mathsf{IsPFunc}(f,A,B)$;
  \item \inlinelean|zfun|: total functionality goals $\mathsf{IsFunc}(A,B,f)$.
\end{itemize}
All three tactics work similarly: they first search among local hypotheses for witnesses of the required properties, or apply standard theorems (e.g., composition of functions is a function) to reduce the goal to simpler subgoals.

They may then recursively call each other and backtrack as needed when application of a theorem fails.
In order to reduce search time and keep automation usable at runtime, some theorems are selected and tagged with relevant attributes---called \texttt{zrel}, \texttt{zpfun}, and \texttt{zfun} for consistency with the tactic names---so that they can be found by the corresponding tactic.
This design is illustrated in \cref{fig:tactics}.
This also allows all three tactics to be extended with new rules as needed.
When an automatic discharge fails, it falls back to the user for manual proof.
One can also pass explicit proofs to definitions and theorems instead of relying on automation, even when the tactics would succeed.
\begin{figure}[t]
  \centering
  \caption{Overview of the automation tactics for relations and functions. Dashed arrows indicate the shape of the goals discharged by each tactic; thick arrows indicate tactic invocations.}
  \label{fig:tactics}
  \begin{tikzpicture}[
      inner sep=8pt,
      every node/.style={rounded corners=3pt},
      arr/.style={->, shorten <= 3pt, shorten >= 3pt},
    ]
    \small
    \node[draw] (rel_goal) {$\vdash R \subseteq A \times B$};
    \node[draw, left=of rel_goal] (pfun_goal) {$\vdash \mathsf{IsPFunc}(f,A,B)$};
    \node[draw, right=of rel_goal] (fun_goal) {$\vdash \mathsf{IsFunc}(A,B,f)$};
    \node[draw, below=of rel_goal, thick] (zrel) {\inlinelean|zrel|};
    \node[draw, below left=of zrel, thick] (zpfun) {\inlinelean|zpfun|};
    \node[draw, below right=of zrel, thick] (zfun) {\inlinelean|zfun|};

    \draw[arr, dashed] (zrel) -- (rel_goal);
    \draw[arr, dashed] (zpfun) -| (pfun_goal);
    \draw[arr, dashed] (zfun) -| (fun_goal);

    \draw[arr, thick] (zpfun) -- (zrel);
    \draw[arr, thick] (zfun) -- (zrel);
    \draw[arr, thick] (zfun) -- (zpfun);
  \end{tikzpicture}
\end{figure}
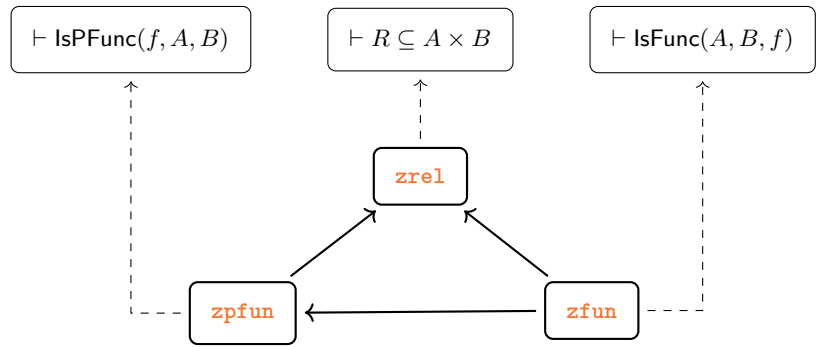
We show a few examples of typical such theorems (with proofs omitted) in \cref{lst:automation-layer-thm}.
\begin{lstlisting}[language=lean, caption={Examples of theorems tagged for automation.}, label=lst:automation-layer-thm, float=t]
@[zrel] -- converse of a relation is a relation
theorem subset_prod_inv {R A B : ZFSet} (hR : R ⊆ A × B) :
  R⁻¹ ⊆ B × A := $\omitproof$
@[zpfun] -- identity is a partial function
theorem Id.IsPFunc {A : ZFSet} : (𝟙A).IsPFunc A A := $\omitproof$
@[zfun] -- composition of total functions is a total function
theorem IsFunc_of_composition_IsFunc {g f : ZFSet} {A B C : ZFSet}
  (hg : B.IsFunc C g) (hf : A.IsFunc B f) :
    A.IsFunc C (composition g f A B C) := $\omitproof$
\end{lstlisting}

Given this infrastructure, we then leverage Lean's \emph{automatic parameters} to pass default tactics in partial definitions depending on side-conditions, so that those side-conditions are automatically discharged when possible.
Automatic parameters are very similar to default parameters, but are specific to tactics: they try to discharge the goal using the provided default tactic if the argument is not supplied explicitly, and fail silently otherwise, leaving the user to provide a manual proof.
Furthermore, those side conditions often depend on parameters that are kept implicit (e.g., the sets $A$ and $B$ in $R \subseteq A \times B$); they are also inferred automatically by Lean.
This ultimately enables working with these definitions---most of the time---without worrying about side conditions or implicit parameters, which is the intended purpose of the \zflean{} framework.

\cref{lst:automation-layer-thm} illustrates this already: the converse relation $R^{-1}$ in theorem \inlinelean|subset_prod_inv| is actually a relation between $B$ and $A$, which are implicit at this level, and relies on the side-condition $R \subseteq A \times B$, automatically discharged by \inlinelean|zrel|.
This principle will be further exemplified in \cref{sec:case-study}.
\begin{lstlisting}[language=lean, caption={Examples of simplification rules for relations and functions.}, label=lst:simp-rules, float=t]
@[simp] theorem Image_empty {R A B : ZFSet} (hR : R ⊆ A.prod B) : R[∅] = ∅ := $\omitproof$
@[simp] theorem range_Id {A : ZFSet} : (𝟙A).Range = A := $\omitproof$
@[simp] theorem Image_of_composition_inv_self_of_bijective {f A B X : ZFSet}
  {hf : A.IsFunc B f} (hf : f.IsBijective) (hX : X ⊆ A) : f⁻¹[f[X]] = X := $\omitproof$
\end{lstlisting}
We also tag relevant equality theorems with the \texttt{simp} attribute, so that they can be used by Lean's \inlinelean|simp| tactic for rewriting.
A few examples are shown in \cref{lst:simp-rules}, with proofs omitted.

Overall, this calculus provides a compact, rewriting‑friendly and easily extensible interface for extensional reasoning with relations and functions \emph{inside} ZFC, while hiding routine side‑conditions behind small, domain‑specific tactics.

\section{Canonical Constructions and Universal Properties}
\label{sec:canon}

The relational calculus of \cref{sec:rel} already provides a stable framework for manipulating relations and (partial) functions internally in the ZFC model.
We now build a library of canonical ZFC objects and expose their characteristic properties in a form designed to be usable in practice, compositional, and integrated with the relational calculus.

\subsection{Canonical objects}
\paragraph*{Boolean algebra}
We define a canonical two-element set-theoretic object $\mathbb{B} : \mathsf{ZFSet} \triangleq \{\bot, \top\}$ and its usual associated algebraic structure, with accompanying lemmas and notations.
Definitions are standard: $\bot \triangleq \varnothing$, $\top \triangleq \{\varnothing\}$, conjunction as intersection, disjunction as union, and negation as set-theoretic complement relative to $\mathbb{B}$.
\begin{note}[Subtypes as extensional types]
  For any set $S : \mathsf{ZFSet}$, its extension can be exposed as a Lean subtype $\{ x : \mathsf{ZFSet} \,\slash\mkern-3mu\slash\, x \in S\}$.
  Concretely, an element $e : \{ x : \mathsf{ZFSet} \,\slash\mkern-3mu\slash\, x \in S\}$ can be coerced back to a set, and the stored membership proof is still accessible.
  Conversely, a raw $e : \mathsf{ZFSet}$ equipped with $h : e \in S$ can be packaged as $\langle e, h \rangle : \{ x : \mathsf{ZFSet} \,\slash\mkern-3mu\slash\, x \in S\}$, and equality of subtype values reduces to equality of underlying sets via extensionality and the proofs are irrelevant.
\end{note}
We define a proper type \mbox{$\mathsf{ZFBool} \triangleq \{ x : \mathsf{ZFSet}\, \slash\mkern-3mu\slash\,  x \in \mathbb{B} \}$} leveraging Lean's \emph{subtypes}, which is \emph{not} a set and must be read as ``the type of sets belonging to $\mathbb{B}$''.
%
%
Our algebraic structure is then elaborated upon this type.
First, we define a \emph{case eliminator} $\mathsf{casesOn}$ for $\mathsf{ZFBool}$, enabling branching upon Booleans---that is, case analysis via the \inlinelean|cases| tactic, or case splitting in definitions.
\begin{lstlisting}[language=lean, caption={Case analysis on internal Booleans.}, label=lst:zfbool-casesOn, float=t]
@[cases_eliminator]
def ZFBool.casesOn {motive : ZFBool → Sort _} (p : ZFBool)
  (false : motive ⊥) (true : motive ⊤) : motive p
\end{lstlisting}
Its type is given in \cref{lst:zfbool-casesOn}.

Then, we define the usual boolean operations (conjunction, disjunction, negation, implication, etc.) as operations on $\mathsf{ZFBool}$, ensuring that the result of each operation is again a member of $\mathbb{B}$ by construction.
\begin{definition}[Conjunction]
  Given $\langle p, h_p \rangle, \langle q, h_q \rangle : \mathsf{ZFBool}$, their conjunction $p \bigwedge q : \mathsf{ZFBool}$ is defined as the pair $\langle p \cap q, h \rangle$ where $h : p \cap q \in \mathbb{B}$ is a proof obtained by case analysis on $p$ and $q$; this yields four cases, all easily discharged.
\end{definition}
\begin{lstlisting}[language=lean, caption={Some simplification lemmas for ZFC Booleans.}, label=lst:bool-lemmas, float=t]
theorem and_comm  (p q : ZFBool)   : p ⋀ q = q ⋀ p := $\omitproof$
theorem and_assoc (p q r : ZFBool) : p ⋀ q ⋀ r = p ⋀ (q ⋀ r) := $\omitproof$
@[simp]
theorem and_true  (p : ZFBool)     : p ⋀ ⊤ = p := $\omitproof$
theorem and_intro (p q : ZFBool)   : p = ⊤ ∧ q = ⊤ → p ⋀ q = ⊤ := $\omitproof$
\end{lstlisting}
Similar definitions are provided for disjunction ($\bigvee$) and negation ($\mathsf{not}$), together with the expected algebraic laws such as commutativity, associativity, distributivity, neutral elements and simplification lemmas.
\cref{lst:bool-lemmas} shows some of those laws.

\paragraph*{Natural numbers}
Mathlib already provides a set-theoretic object $\omega$ (the first infinite von Neumann ordinal) inside the ZFC model.
Reusing it is technically harmless: nothing in the development relies on a peculiarity of our presentation, and one can obtain a natural-number structure from $\omega$ by working with its elements as (finite) von Neumann ordinals.
We nonetheless re-derive $\mathbb{N}$ directly from the axiom of infinity as the least inductive set---thereby keeping the development self-contained---because $\omega$ is introduced a priori as an \emph{ordinal} object geared towards ordinal reasoning rather than towards a minimal natural-numbers interface.
Our construction explicitly targets natural numbers: it fixes the successor, recursion/induction principle, and arithmetic operations with definitional equalities.

%
%
We still achieve a von Neumann-like representation of (transitive) natural numbers, where each natural number is represented as the set of all its predecessors.

%
\begin{definition}[Inductive set]
  We call a set $X : \mathsf{ZFSet}$ \emph{inductive} when
  \begin{enumerate}
    \item $\varnothing \in X$;
    \item $X$ is closed under the operation $n \mapsto n \cup \{n\}$, i.e.\@
      \( \forall\, n \in X,\, n \cup \{n\} \in X \).
  \end{enumerate}
\end{definition}

The axiom of infinity guarantees the existence of at least one inductive (infinite) set, which we classically choose and denote by $S^{\infty}$.
We then define the set of natural numbers as follows.
\begin{definition}\label{def:nat}
  $\mathbb{N}$ is defined as the smallest inductive set:
  \[
    \mathbb{N} \triangleq \bigcap \left\{ X \subseteq S^{\infty} \mid X\ \text{is inductive} \right\}
  \]
\end{definition}
\begin{remark}
  \cref{def:nat} is actually agnostic to the choice of $S^{\infty}$.
\end{remark}
In \zflean, we expose naturals as the subtype $\mathsf{ZFNat} \triangleq \{n : \mathsf{ZFSet} \,\slash\mkern-3mu\slash\, n \in \mathbb{N}\}$, with $0_{\mathbb{N}} := \varnothing$ and $1_{\mathbb{N}} \triangleq \mathsf{succ}(0_{\mathbb{N}})$ where the successor function is defined as:
\begin{equation*}
  \begin{matrix}
    \mathsf{succ} : &\mathsf{ZFNat} &\to &\mathsf{ZFNat}\\
                    & \langle n, h \rangle &\mapsto & \langle n \cup \{n\}, h' \rangle
  \end{matrix} \quad \text{where } h' : n \cup \{n\} \in \mathbb{N} \text{ is derived from } h : n \in \mathbb{N}.
\end{equation*}

This definition yields the expected inductive structure on $\mathbb{N}$, with each natural number $n$ being the set of all smaller natural numbers.
This offers a straightforward definition of the standard (strict) total ordering on $\mathsf{ZFNat}$ as set membership: $\langle m, h_m \rangle < \langle n, h_n \rangle \triangleq m \in n$, which we instantiate in Lean and derive the expected properties (irreflexivity, transitivity, trichotomy, etc.). We also define the non-strict order as usual: $m \leq n \triangleq m < n \lor m = n$.

\begin{lstlisting}[language=lean, caption={Recursor for $\mathsf{ZFNat}$.}, label=lst:nat-rec, float=t]
@[induction_eliminator]
def rec.{u} {motive : ZFNat → Sort u} (n : ZFNat)
  (zero : motive 0) (succ : Π x, motive x → motive (succ x)) : motive n := $\omitproof$
\end{lstlisting}

We then derive a well-founded recursion principle---as well as weak/strong induction---as a fixpoint, using that $\mathsf{succ}$ induces a well-founded relation on $\mathsf{ZFNat}$.
The type of $\mathsf{ZFNat.rec}$ is shown in \cref{lst:nat-rec} and contains the two usual branches for zero and successor.
We indicate to Lean that this is an \emph{induction eliminator} so that the \inlinelean|induction| tactic can leverage it.
\begin{lstlisting}[language=lean, caption={Basic arithmetic operations on $\mathsf{ZFNat}$ and some associated facts.}, label=lst:nat-lemmas, float=t]
def pred  (m : ZFNat) : ZFNat := ZFNat.rec m 0 (fun x _ ↦ x)
def add (n m : ZFNat) : ZFNat := ZFNat.rec n m (fun _ ↦ succ)
def sub (n m : ZFNat) : ZFNat := ZFNat.rec m n (fun _ ↦ pred)
def mul (n m : ZFNat) : ZFNat := ZFNat.rec n 0 (fun _ ↦ (· + m))
--
theorem sub_add_distrib {n m k : ZFNat} : n - (m + k) = n - m - k := $\omitproof$
theorem left_distrib    {n m k : ZFNat} : n * (m + k) = n * m + n * k := $\omitproof$
theorem mul_lt_mono     {n m k : ZFNat} (h : 0 < k) : n < m → k*n < k*m := $\omitproof$
\end{lstlisting}
Finally, we define standard arithmetic operations by primitive recursion, together with convenient notations and a furnished library of arithmetic lemmas.
A few examples of such theorems are shown in \cref{lst:nat-lemmas}.
\begin{lstlisting}[language=lean, caption={Example of ring equality on $\mathsf{ZFNat}$.}, label=ex:nat-ring, float=t]
example (a b : ZFNat) : (a + b)² = a² + 2*a*b + b² := by ring
\end{lstlisting}
We eventually equip $\mathsf{ZFNat}$ with a \emph{commutative semiring} structure and allow to leverage Mathlib's \inlinelean|ring| tactic, which can automatically prove equalities in semirings by normalization.
\cref{ex:nat-ring} shows that \inlinelean|ring| can be used seamlessly on $\mathsf{ZFNat}$ and is able to prove the binomial squares identity automatically.

\paragraph*{Integers and rationals}
\newcommand{\zrel}{\ensuremath{\sim_{\mathbb{Z}}}}

We follow standard constructions to build integers as equivalence classes of pairs of naturals under the relation $(a,b) \zrel (c,d) \iff a + d = b + c$ for all $a,b,c,d : \mathsf{ZFNat}$.
Contrary to naturals however, we do not primarily expose integers as a set, but as quotient type.
\begin{definition}[Ring of integers]
  We define the set of integers as the quotient:
  \[
    \mathsf{ZFInt} \triangleq (\mathsf{ZFNat} \times \mathsf{ZFNat})/\!\zrel
  \]
  We then define $0_{\mathbb{Z}} \triangleq [(0_{\mathbb{N}}, 0_{\mathbb{N}})]_{\zrel}$, $1_{\mathbb{Z}} \triangleq [(1_{\mathbb{N}}, 0_{\mathbb{N}})]_{\zrel}$ (equivalence classes of the representatives), and the usual addition, subtraction and multiplication operations on $\mathsf{ZFInt}$ by lifting the corresponding operations on representatives and instantiate a \emph{commutative ring} structure $\langle \mathsf{ZFInt}, +, * \rangle$ with the usual ordering.

  We then define a set $\mathbb{Z}$ of canonical representatives of integers as the union as follows
  \[
    \mathbb{Z} \triangleq \mathbb{N} \times \{0_{\mathbb{N}}\}\, \cup\, \{0_{\mathbb{N}}\} \times \mathbb{N}
  \]
  and prove that the subtype built from the extension of the set $\mathbb{Z}$ is isomorphic to $\mathsf{ZFInt}$ (which also provides coercions).
\end{definition}
\begin{remark}
  This definition does not imply that $\mathbb{N} \subseteq \mathbb{Z}$
  but rather that $\mathbb{N}$ is isomorphic to the set of nonnegative integers.
  Instead, it views integers as algebraic ``distances'' between two naturals, so any integer can be represented by infinitely many pairs of naturals.
  This is particularly useful to define the opposite operation since it simply consists in flipping pairs; subtraction is then directly obtained as addition of the opposite.
\end{remark}

\newcommand{\qrel}{\ensuremath{\sim_{\mathbb{Q}}}}
Implementing rationals follows a similar pattern, as equivalence classes of pairs of integers under the relation $(a,b) \qrel (c,d) \iff a * d = b * c$ for all $a,b,c,d : \mathsf{ZFInt}$ with $b,d \neq 0_{\mathbb{Z}}$.
This yields a quotient type $\mathsf{ZFRat} \triangleq (\mathsf{ZFInt} \times \mathsf{ZFInt}^{\star})/\!\qrel$.\footnote{$\mathsf{ZFInt}^{\star}$ denotes the type $\mathsf{ZFInt}$ without $0_{\mathbb{Z}}$.}
As with integers, we implement the basic operations: addition, subtraction, and multiplication are lifted from $\mathsf{ZFInt}$, while division is defined via $\qrel$.
We prove the required properties and ultimately instantiate a \emph{commutative field} structure on $\langle \mathsf{ZFRat}, +, * \rangle$.

\paragraph*{Coproducts and options}
We also provide canonical set-theoretic representations for coproducts (disjoint unions) and options.
These constructions carry less mathematical weight than the previous ones, but can be useful to model sum types and alternative partiality in a set-theoretic context.

\begin{definition}
  Given two sets $A,B : \mathsf{ZFSet}$, the \emph{sum}, or \emph{coproduct} type $A \uplus B$ is defined as the following extension (Lean subtype):
  \begin{equation*}
    A \uplus B \triangleq \{ {x : \mathsf{ZFSet} \,\slash\mkern-3mu\slash\, x \in (\{ \bot \} \times A) \cup (\{ \top \} \times B)} \}
  \end{equation*}
  and is equipped with canonical injections
  \begin{equation*}
    \mathsf{inl} : A \to A \uplus B
    \qquad\text{and}\qquad
    \mathsf{inr} : B \to A \uplus B
  \end{equation*}
\end{definition}
From this, we provide a set-theoretic definition of the \emph{option} type via the coproduct
\mbox{$\mathsf{Option}\, A \triangleq \{ \varnothing \} \uplus A$}
equipped with expected canonical injections
\mbox{$\mathsf{none} : \mathsf{Option}\, A \triangleq \mathsf{inl}(\varnothing)$}
and \mbox{$\mathsf{some} : A \to \mathsf{Option}\, A \triangleq \mathsf{inr}$}
as well as a case eliminator.

\subsection{Embeddings, isomorphisms, and transport}
\label{subsec:iso}
\paragraph*{Embeddings and isomorphisms}
We define embeddings and isomorphisms between sets internally in ZFC.
\begin{definition}[Set-level embeddings and isomorphisms]
A set $A$ is \emph{embeddable} into a set $B$, denoted $A \zbinop{\hookrightarrow} B$, when there exists an injective total function from $A$ to $B$:
\[
  A \zbinop{\hookrightarrow} B \triangleq \exists\, (f : \mathsf{ZFSet})\, (h_f : \mathsf{IsFunc}(A,B,f)),\, \mathsf{Injective}(f)
\]
$A$ and $B$ are said to be \emph{isomorphic}, denoted $A \zbinop{\cong} B$, when there exists a bijective function\footnote{In the category of sets, isomorphisms are the bijections.} from $A$ to $B$:
\[
  A \zbinop{\cong} B \triangleq \exists\, (f : \mathsf{ZFSet})\, (h_f : \mathsf{IsFunc}(A,B,f)),\, \mathsf{Bijective}(f)
\]
\end{definition}
\begin{remark}
  $\mathsf{Injective}(f)$ and $\mathsf{Bijective}(f)$ expect $f$ to be a total function from $A$ to $B$, which is given by $h_f$; this is handled by our automation layer---here, by the \inlinelean|zfun| tactic.
\end{remark}
The relation $\zbinop{\cong}$ is shown to be an equivalence relation on sets, and $\zbinop{\hookrightarrow}$ is shown to be antisymmetric up to isomorphism, a property known as the \emph{Cantor-Schr\"{o}der-Bernstein theorem}~\cite{CSB_thm}.
\begin{theorem}[Cantor-Schr\"{o}der-Bernstein]
For any sets $A,B$, if $A \zbinop{\hookrightarrow} B$ and $B \zbinop{\hookrightarrow} A$, then $A \zbinop{\cong} B$, i.e. in \zflean:
\begin{lstlisting}[language=lean, basicstyle=\normalfont\ttfamily]
theorem isIso_of_biembedding {A B : ZFSet} (h : A ↪ᶻ B) (h' : B ↪ᶻ A) :
  A ≅ᶻ B := $\omitproof$
\end{lstlisting}
\end{theorem}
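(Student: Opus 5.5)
The plan is to prove Cantor-Schröder-Bernstein by reusing the type-level version already in Mathlib (\inlinelean|Function.Embedding.schroeder_bernstein|), working through the extensional subtypes $\{x : \mathsf{ZFSet} \,\slash\mkern-3mu\slash\, x \in A\}$ of the note on subtypes. A direct set-level proof would have to iterate $g \zbinop{\circ} f$ via $\mathsf{ZFNat.rec}$, which is heavier.

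\textbf{Step 1: from sets to types.} Let $f$ witness $A \zbinop{\hookrightarrow} B$ and $g$ witness $B \zbinop{\hookrightarrow} A$. Turn $f$ into a Lean function
\[
  \tilde f : \{x \,\slash\mkern-3mu\slash\, x\in A\} \to \{y \,\slash\mkern-3mu\slash\, y\in B\},
  \qquad \langle x,h\rangle \mapsto \langle \fapply{f}{\langle x, \omitproof\rangle}, \omitproof\rangle .
\]
Two facts are needed here. First, totality gives $x \in \Dom(f)$. Second, $\fapply{f}{x} \in B$, because $f \subseteq A\times B$ and $(x,\fapply{f}{x}) \in f$, and \inlinelean|pair_mem_prod| then gives membership in $B$. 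Injectivity of $\tilde f$ follows from $\IsInjective(f)$, after unfolding it to the pointwise statement $\fapply{f}{x}=\fapply{f}{x'} \implies x=x'$, together with subtype extensionality. Define $\tilde g$ in the same way. Mathlib then gives a bijection $h$ between the two subtypes.

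\textbf{Step 2: back to sets.} Package $h$ as a set using the $\lambda$-abstraction of \cref{thm:lambda}:
\[
  F \triangleq \zunop{\lambda}\, x \in A \mapsto \begin{cases} h(\langle x, \omitproof\rangle) & x\in A\\ \varnothing & \text{otherwise}\end{cases}
\]
The case split, decided classically, is there so that the body is a total endofunction on $\mathsf{ZFSet}$. Since every value lies in $B$, we get $F \in B^A$. By $\beta$-reduction, $\fapply{F}{x} = h(\langle x,\omitproof\rangle)$ for $x\in A$.

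\textbf{Step 3: bijectivity of $F$.} Injectivity of $F$ follows from injectivity of $h$ and subtype extensionality. For surjectivity, take $y\in B$, obtain a preimage $\langle x,h_x\rangle$ under $h$, and conclude using the specification of $\lambda$.

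\textbf{Expected obstacle.} The main difficulty is the bookkeeping in Steps 1 and 3, not any mathematics. It consists of matching the dependent membership proofs inside $\fapplysym$ (proof irrelevance handles this), and unfolding \zflean's $\IsInjective$ and $\IsSurjective$, which are phrased relationally through images and converses, into the pointwise forms used above. I would first prove two small bridging lemmas for these unfoldings. If $\IsSurjective$ is stated as $\Ran(F)=B$, then surjectivity reduces to the membership argument above combined with the specification of $\lambda$.
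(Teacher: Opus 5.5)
Your proof is correct, but it takes a different route from the paper. The paper proves the theorem directly inside the ZFC model. It only sketches this, as the classical construction by successive sets of elements that are stable under the two embeddings, i.e.\ the standard iteration of images under the given injections. That construction is carried out with the set-level machinery of images, composition and converse.

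You never do any iteration. Instead, you move the two injections down to the extensional subtypes and call Mathlib's type-level \texttt{Function.Embedding.schroeder\_bernstein}. You then turn the resulting Lean bijection back into a set with $\zunop{\lambda}$. This works because Mathlib's separation accepts arbitrary Lean predicates, so the graph of any Lean function from the subtype of $A$ to the subtype of $B$ is a set in $B^A$. Your classical case split correctly handles the requirement that the $\lambda$-body be a total endofunction on sets. The remaining steps (totality giving $x \in \Dom(f)$, \texttt{pair\_mem\_prod} for landing in $B$, subtype extensionality, and unfolding the injectivity and surjectivity predicates) are routine.

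Your approach is much shorter and needs no recursion or fixed-point construction at the set level. It is also a good instance of the bridge-and-transport methodology the paper itself advocates. The paper's direct proof buys different things: it keeps the result self-contained at the set level and exercises the relational calculus the library is meant to showcase. It also yields an explicit description of the bijection in terms of the two given embeddings. Yours, by contrast, is obtained from Mathlib's purely existential statement.
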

\begin{proof}
  The proof is available in the artifacts and follows a classical pattern using successive sets of \emph{stable} elements under the two embeddings.
  \fixme[Give more details?]
\end{proof}

\paragraph*{Bridges to Lean types and transport along isomorphisms}
To interoperate with Lean/Mathlib infrastructure, our library provides explicit ``bridges'' between canonical ZFC objects and native types, notably we prove that our constructions are isomorphic---here, at type level, denoted $\simeq$---to their Lean counterparts, namely
\begin{align*}
  \mathsf{ZFBool} &\simeq \mathsf{Bool}, \qquad \mathsf{ZFNat} \simeq \mathsf{Nat}, \qquad \mathsf{ZFInt} \simeq \mathsf{Int}, \\
  A \uplus B &\simeq
    \{ x : \mathsf{ZFSet} \,\slash\mkern-3mu\slash\, x \in A \} \oplus
    \{ y : \mathsf{ZFSet} \,\slash\mkern-3mu\slash\, y \in B \} \quad \text{(Lean's sum type)}\\
  \mathsf{ZFSet.Option}\, A &\simeq \mathsf{Option}\,
    \{ x : \mathsf{ZFSet} \,\slash\mkern-3mu\slash\, x \in A \}
\end{align*}
We also instantiate coercions when relevant, so that users can seamlessly switch between ZFC and Lean/Mathlib representations, as illustrated in the next example.
\begin{example}
Proving that conjunction distributes over disjunction on $\mathsf{ZFBool}$, i.e.\@ that for any $p,q,r : \mathsf{ZFBool}$, $p \zbinop{\land} (q \zbinop{\lor} r) = (p \zbinop{\land} q) \zbinop{\lor} (p \zbinop{\land} r)$ holds, can be done by transporting the corresponding theorem from Lean's native boolean type $\mathsf{Bool}$; see theorem \inlinelean|and_or_distrib_left| in the artifacts for full proof details.
\end{example} 

\section{Case Study and Evaluation}
\label{sec:case-study}

Internally, all constructions developed so far in the ZF model are \emph{sets}, including functions, relations, naturals, etc.
This section illustrates the intended \emph{user workflow} in \zflean{} on a classical result that exercises most of the infrastructure developed above, with an attempt to demonstrate how the interface is designed so that these implementation details rarely surface.
The chosen example is the \emph{currying isomorphism}, which is representative because it combines nested abstraction, function application and transformations, while remaining mathematically standard and proof-script friendly.

\subsection{Currying isomorphism}

\begin{theorem}[Currying isomorphism]
\label{thm:isIso_curry}
For any sets $A,B,C$, the following holds:
\begin{equation*}
  C^{A\times B} \zbinop{\cong} (C^B)^A
\end{equation*}
\end{theorem}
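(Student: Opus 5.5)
We exhibit an explicit witness for $C^{A\times B} \zbinop{\cong} (C^B)^A$ as a nested $\lambda$-abstraction and prove it bijective through the calculus of \cref{sec:rel}. Define
\begin{equation*}
  \mathsf{curry} \triangleq \zlambda{C^{A\times B}}{(C^B)^A}{f}{\left(\zlambda{A}{(C^B)}{a}{\left(\zlambda{B}{C}{b}{\fapply{f}{\langle (a,b), \omitproof \rangle}}\right)}\right)}
\end{equation*}
The first step is well-typedness, i.e.\ $\IsFunc(C^{A\times B}, (C^B)^A, \mathsf{curry})$. By the membership criterion for $\lambda$-abstractions, it suffices to show the body lies in the codomain. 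This reduces, from the inside out, to two facts. First, $\fapply{f}{(a,b)} \in C$: this is totality of $f$, using that $(a,b)\in A\times B$ by \inlinelean|pair_mem_prod|. Second, each inner abstraction belongs to the corresponding exponential. The domain-membership proofs hidden in $\omitproof$ come from $\Dom(f)=A\times B$ for $f \in C^{A\times B}$, which is recovered from the definition of $B^A$ via $\IsFunc$.

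For injectivity, take $f,g \in C^{A\times B}$ with $\fapply{\mathsf{curry}}{f} = \fapply{\mathsf{curry}}{g}$. Applying $\beta$-reduction from \cref{thm:lambda} turns both sides into nested $\lambda$-terms. Two applications of extensionality from \cref{thm:lambda} then give $\fapply{f}{(a,b)} = \fapply{g}{(a,b)}$ for all $a\in A$, $b\in B$. To conclude $f=g$, we use \inlinelean|is_func_ext_iff| from \cref{ex:func-ext}. Every $z\in A\times B$ satisfies $z=(\pi_1 z,\pi_2 z)$ by \inlinelean|pair_eta|, so pointwise equality on pairs suffices.

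For surjectivity, given $F \in (C^B)^A$, we take the preimage $f \triangleq \zlambda{A\times B}{C}{z}{\fapply{(\fapply{F}{\pi_1 z})}{\pi_2 z}}$. This is well-typed because $\fapply{F}{a}\in C^B$ and $\pi_1 z\in A$, $\pi_2 z \in B$. Computing $\fapply{\mathsf{curry}}{f}$ by $\beta$-reduction and then $\beta$ again on $f$ at $(a,b)$ gives, via \inlinelean|π₁_pair| and \inlinelean|π₂_pair|, the term $\zlambda{A}{(C^B)}{a}{\zlambda{B}{C}{b}{\fapply{(\fapply{F}{a})}{b}}}$. Applying $\eta$-expansion from \cref{thm:lambda} twice, first to each $\fapply{F}{a}$ under the binder via extensionality and then to $F$, shows this term equals $F$.

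The main obstacle is not mathematical but the dependent proof terms inside $\fapplysym$. Rewriting $f$ to $g$, or $z$ to $(\pi_1 z,\pi_2 z)$, changes the type of the domain-membership proofs, so naive \inlinelean|rw| fails with motive errors. First, I would prove a congruence lemma for $\fapplysym$: equal functions and equal arguments give equal values, for arbitrary membership proofs, using proof irrelevance. I would then route all such rewrites through that lemma. The remaining $\IsFunc$ side-conditions for the nested abstractions should be discharged by \inlinelean|zfun|, possibly after tagging the $\lambda$ typing lemma with the \texttt{zfun} attribute.
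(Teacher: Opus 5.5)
Your proof is correct, but it is organized differently from the paper's. The paper defines both $\zunop{\mathsf{curry}}$ and $\zunop{\mathsf{uncurry}}$ as set-level functions and proves that each is total. It then establishes $\zunop{\mathsf{curry}} \zbinop{\circ} \zunop{\mathsf{uncurry}} = \Id{(C^B)^A}$ and $\zunop{\mathsf{uncurry}} \zbinop{\circ} \zunop{\mathsf{curry}} = \Id{C^{A\times B}}$ pointwise, via \inlinelean|is_func_ext_iff|, \inlinelean|fapply_composition|, \inlinelean|fapply_Id| and repeated $\beta$-reduction. It concludes with the generic lemma \inlinelean|isIso_of_two_sided_inverse|.

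You instead check the definition of $\zbinop{\cong}$ directly, by showing $\zunop{\mathsf{curry}}$ is total and bijective. Your surjectivity computation is exactly the pointwise content of the paper's first equation, $\mathcal{C}(\mathcal{U}(F)) = F$. For injectivity, you cancel $\mathcal{C}$ using the forward direction of $\lambda$-extensionality together with \inlinelean|pair_eta|, rather than computing $\mathcal{U}(\mathcal{C}(f)) = f$.

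What your route buys is modest. You never need $\zunop{\mathsf{uncurry}}$ as a single set or its outer typing lemma, only that each preimage lies in $C^{A\times B}$. You also avoid the composition and identity lemmas. What the paper's route buys is a reusable inverse as a set-level object and two symmetric computations. It also deliberately exercises the relational calculus that the case study is meant to showcase.

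Your concern about the dependent membership proofs inside $\fapplysym$ is legitimate. The paper's skeleton handles it by supplying explicit proof arguments to \inlinelean|fapply_lambda|. A proof-irrelevant congruence lemma, as you propose, is an equally valid way through.
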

We use two mutually inverse functions to build the isomorphism, corresponding to the usual (un)currying operation. Those functions are defined as follows in \zflean.
\begin{definition}[Currying and uncurrying]
We define set-level functions $\zunop{\mathsf{curry}} \colon \mathsf{ZFSet}$ and $\zunop{\mathsf{uncurry}} \colon \mathsf{ZFSet}$ as follows in \zflean, mirroring the mathematical definition and syntax:
\begin{equation*}
  \zunop{\mathsf{curry}} \triangleq \left(\mkern-5mu\zlambda{C^{A \times B}}{(C^B)^A}{f}{\mathcal{C}(f)}\mkern-5mu\right),\,
  \zunop{\mathsf{uncurry}} \triangleq \left(\mkern-5mu\zlambda{(C^B)^A}{C^{A \times B}}{f}{\mathcal{U}(f)}\mkern-5mu\right)
\end{equation*}
where
\begin{equation*}
  \mathcal{C}(f) \triangleq \left(\zlambda{A}{C^B}{a}{\left(\zlambda{B}{C}{b}{\fapply{f}{\langle (a, b), \omitproof \rangle}}\right)}\!\right)
\end{equation*}
and
\begin{equation*}
  \mathcal{U}(f) \triangleq \left(\zlambda{A \times B}{C}{(a,b)}{\fapply{\left(\fapply{f}{\langle a, \omitproof \rangle}\right)}{\langle b, \omitproof \rangle}}\!\right)
\end{equation*}
We prove that $\zunop{\mathsf{curry}}$ and $\zunop{\mathsf{uncurry}}$ are total functions on their respective domains, so that automation can pick those facts up when needed:
$\mathsf{IsFunc}(C^{A \times B}, (C^B)^A, \zunop{\mathsf{curry}})$ and $\mathsf{IsFunc}((C^B)^A, C^{A \times B}, \zunop{\mathsf{uncurry}})$.
\end{definition}
\todo[Should I leave this hideous definition as is? Otherwise, the case study loses some of its meaning...]

We then prove that these two functions are mutually inverse; this is the core of the proof of \cref{thm:isIso_curry}.
\begin{lemma}
\label{lem:curry-uncurry-inverse}
\begin{equation*}
  \zunop{\mathsf{curry}} \zbinop{\circ} \zunop{\mathsf{uncurry}} = \Id{(C^B)^A} \quad \text{and} \quad
  \zunop{\mathsf{uncurry}} \zbinop{\circ} \zunop{\mathsf{curry}} = \Id{C^{A \times B}}
\end{equation*}
\end{lemma}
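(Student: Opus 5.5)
Both equalities are equalities between total functions with the same domain and codomain, so we prove them pointwise. We use the extensionality principle \inlinelean|is_func_ext_iff| of \cref{ex:func-ext}. Its side-conditions are discharged by \inlinelean|zfun|: the composites are total functions by \inlinelean|IsFunc_of_composition_IsFunc| together with the totality facts for $\zunop{\mathsf{curry}}$ and $\zunop{\mathsf{uncurry}}$, and the identities are total functions. For the first equation it therefore suffices to show, for every $f \in (C^B)^A$,
\begin{equation*}
  \fapply{(\zunop{\mathsf{curry}} \zbinop{\circ} \zunop{\mathsf{uncurry}})}{\langle f, \omitproof\rangle} = \fapply{\Id{(C^B)^A}}{\langle f, \omitproof\rangle} = f .
\end{equation*}
The left-hand side rewrites to $\fapply{\zunop{\mathsf{curry}}}{\fapply{\zunop{\mathsf{uncurry}}}{f}}$ because evaluation commutes with composition. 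Two $\beta$-reductions from \cref{thm:lambda} then turn this into $\mathcal{C}(\mathcal{U}(f))$. Each $\beta$-step needs a membership proof: first that $f$ lies in the domain of $\zunop{\mathsf{uncurry}}$, and then that $\mathcal{U}(f) \in C^{A\times B}$, which follows from the totality lemma for $\zunop{\mathsf{uncurry}}$.

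Next we $\eta$-expand $f$ as a function $A \to C^B$ using \cref{thm:lambda}, and compare the two sides with \inlinelean|lambda_ext_iff|. This reduces the goal to showing, for each $a \in A$, that $\zlambda{B}{C}{b}{\fapply{\mathcal{U}(f)}{\langle (a,b),\omitproof\rangle}} = \fapply{f}{\langle a,\omitproof\rangle}$. We $\eta$-expand $\fapply{f}{a}$, which is legitimate because $\fapply{f}{a} \in C^B$, and apply \inlinelean|lambda_ext_iff| once more. It then remains to show, for $b \in B$, that $\fapply{\mathcal{U}(f)}{(a,b)} = \fapply{(\fapply{f}{a})}{b}$. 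This is a single $\beta$-reduction on $\mathcal{U}(f)$ at the point $(a,b) \in A\times B$, combined with \inlinelean|π₁_pair| and \inlinelean|π₂_pair|. These are needed because the binder $(a,b)$ in $\mathcal{U}$ is really $x \mapsto \fapply{(\fapply{f}{\pi_1 x})}{\pi_2 x}$.

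The second equation is handled symmetrically. For $f \in C^{A\times B}$ we reduce $\fapply{\zunop{\mathsf{uncurry}}}{\fapply{\zunop{\mathsf{curry}}}{f}}$ to $\mathcal{U}(\mathcal{C}(f))$, then $\eta$-expand $f$ on $A \times B$ and apply \inlinelean|lambda_ext_iff|. For $x \in A\times B$ we get $\fapply{(\fapply{\mathcal{C}(f)}{\pi_1 x})}{\pi_2 x}$, and two $\beta$-reductions turn it into $\fapply{f}{(\pi_1 x, \pi_2 x)}$. This equals $\fapply{f}{x}$ by \inlinelean|pair_eta|.

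We expect the main obstacle to be the dependent membership proofs inside $\fapplysym$, not the mathematics. After each rewrite, the proof terms carried by nested applications, for example the proof that $b \in \Dom(\fapply{f}{a})$, mention terms that have been rewritten. As a result, \inlinelean|rw| may fail with ``motive is not type correct''. To handle this, we would first prove a congruence lemma stating that $\fapply{f}{\langle x,h\rangle} = \fapply{g}{\langle y,h'\rangle}$ whenever $f=g$ and $x=y$. This lemma is immediate from proof irrelevance and subst. We would then state the $\beta$-steps as standalone \inlinelean|have|s using this lemma, rather than rewriting under binders, and discharge the domain-membership side goals through \inlinelean|zfun|-derived facts ($\Dom(f)=A$ for $f \in B^A$).
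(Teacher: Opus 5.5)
Your proposal is correct and follows the paper's proof. It reduces pointwise via \texttt{is\_func\_ext\_iff}, uses \texttt{fapply\_Id}, \texttt{fapply\_composition} and two $\beta$-reductions to reach $\mathcal{C}(\mathcal{U}(f)) = f$, applies extensionality twice more down to a trivial pointwise identity, and handles the second equation symmetrically. The only variation is that for $\mathcal{C}(\mathcal{U}(f)) = f$ you $\eta$-expand the right-hand side and apply \texttt{lambda\_ext\_iff} directly, which is exactly how \texttt{is\_func\_ext\_iff} is proved in \cref{ex:func-ext}. This spares you the $\beta$-reductions of $\mathcal{C}$, but you must rewrite only the right-hand occurrence of $f$, since $f$ also appears inside $\mathcal{U}(f)$.
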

\begin{proof}
  We sketch the proof of the first equality; the second one follows a similar pattern.
  Full proof details are available in the artifacts.
  First, note that both sides contain side-conditions ensuring that everything is well-defined, all automatically discharged by the automation layer.
  We first reduce the equality of functions to a pointwise equality using function extensionality, using theorem \inlinelean|is_func_ext_iff| from \cref{ex:func-ext}.
  Then, let $f \in (C^B)^A$ be an arbitrary set-level curried function; we need to show the following equality:
  \begin{equation*}
    \fapply{(\zunop{\mathsf{curry}} \zbinop{\circ} \zunop{\mathsf{uncurry}})}{\langle f, \omitproof \rangle} =
    \fapply{\Id{(C^B)^A}}{\langle f, \omitproof \rangle}
  \end{equation*}
  The right-hand side reduces to $f$ by definition of the identity function and using the following theorem:
\begin{lstlisting}[language=lean]
theorem fapply_Id {A x : ZFSet} (hx : x ∈ A) : @ᶻ𝟙A ⟨x, $\omitproof$⟩ = ⟨x, hx⟩ := $\omitproof$
\end{lstlisting}
  We then use commutativity of evaluation and composition in the left-hand side using the following theorem:
\begin{lstlisting}[language=lean]
theorem fapply_composition {g f : ZFSet} {A B C : ZFSet}
  (hg : B.IsFunc C g) (hf : A.IsFunc B f) {x : ZFSet} (xA : x ∈ A) :
  @ᶻ(g ∘ᶻ f) ⟨x, $\omitproof$⟩ = @ᶻg ⟨@ᶻf ⟨x, $\omitproof$⟩, $\omitproof$⟩ := $\omitproof$
\end{lstlisting}
  This reduces the left-hand side to
  $\fapply{\zunop{\mathsf{curry}}}{\langle \fapply{\zunop{\mathsf{uncurry}}}{\langle f, \omitproof \rangle}, \omitproof \rangle}$.
  Unfolding the definitions of $\zunop{\mathsf{curry}}$ and $\zunop{\mathsf{uncurry}}$ and applying $\beta$-reduction twice from \cref{thm:lambda} then yields the following equality to prove:
  \begin{equation*}
    \mathcal{C}(\mathcal{U}(f)) = f
  \end{equation*}
  Again, we leverage the framework's functional extensionality theorem and apply it twice to reduce this equality to a pointwise one for any $a \in A$ and $b \in B$:
  \begin{equation*}
    \fapply{(\fapply{\mathcal{C}(\mathcal{U}(f))}{\langle a, \omitproof \rangle})}{\langle b, \omitproof \rangle} =
    \fapply{(\fapply{f}{\langle a, \omitproof \rangle})}{\langle b, \omitproof \rangle}
  \end{equation*}
  Further $\beta$-reductions are applied to the unfolded definition of $\mathcal{C}$, yielding:
  \begin{equation*}
    \fapply{\left(\zlambda{B}{C}{b'}{\fapply{\mathcal{U}(f)}{\langle (a, b'), \omitproof \rangle}}\right)}{\langle b, \omitproof \rangle} =
    \fapply{(\fapply{f}{\langle a, \omitproof \rangle})}{\langle b, \omitproof \rangle}
  \end{equation*}
  and to the unfolded definition of $\mathcal{U}$ as well, reducing the goal to the trivial equality
  \mbox{\(
    \fapply{\left(\fapply{f}{\langle a, \omitproof \rangle}\right)}{\langle b, \omitproof \rangle} = 
    \fapply{(\fapply{f}{\langle a, \omitproof \rangle})}{\langle b, \omitproof \rangle}
  \)}.%
\end{proof}

The proof of \cref{thm:isIso_curry} then boils down to applying the following theorem with the equalities from \cref{lem:curry-uncurry-inverse}:
\begin{lstlisting}[language=lean]
theorem isIso_of_two_sided_inverse {A B : ZFSet} {f g : ZFSet}
  {hf : A.IsFunc B f} {hg : B.IsFunc A g}
  (left_inv : g ∘ᶻ f = 𝟙A) (right_inv : f ∘ᶻ g = 𝟙B) : A ≅ᶻ B := $\omitproof$
\end{lstlisting}

\begin{lstlisting}[language=lean, float=t,
  caption={Proof skeleton of the currying isomorphism.},
  label=lst:isIso-curry-skel]
theorem isIso_curry {A B C : ZFSet} : (A × B).funs C ≅ᶻ A.funs (B.funs C) := by
  have l_inv : uncurry ∘ᶻ curry = 𝟙((A × B).funs C) := $\omitproof$
  have r_inv : curry ∘ᶻ uncurry = 𝟙(A.funs (B.funs C)) := by
    rw [is_func_ext_iff]
    intro f hf
    unfold curry uncurry
    rw [fapply_Id, fapply_composition, fapply_lambda $\omitproof$ $\omitproof$, fapply_lambda $\omitproof$ $\omitproof$]
    $\omitproof$
  exact isIso_of_two_sided_inverse l_inv r_inv
\end{lstlisting}

All routine side-conditions (``is a partial/total function'', ``is a relation'') are discharged structurally and automatically by \inlinelean|zrel|/\inlinelean|zpfun|/\inlinelean|zfun| through automatic parameters and the appropriate attribute-tagged closure lemmas.
The skeleton of the proof script shown in \cref{lst:isIso-curry-skel} closely follows the mathematical argument described above.
At the proof-script level, the role of the automation layer is not to perform proof search, but to \emph{erase boilerplate}: the user writes the mathematical argument, and the automation layer discharges the ubiquitous well-formedness obligations generated by set-level definitions and rewriting.

\subsection{Size of the artifact}
\label{subsec:artifacts}

\begin{table}[t]
\centering
\caption{Artifact sizes in lines of Lean code (LoC) and number of theorems, grouped by component.}
\label{tab:artifacts-loc}
\small
\begin{tabular}{lrr}
\hline
\textbf{Component (modules)} & \textbf{LoC} & \textbf{\#Theorems} \\
\hline
Relational calculus & 2,534 & 172 \\

Embeddings, isomorphisms & 1,626 & 31 \\

Other canonicals (Booleans, sums, rationals) & 1,362 & 153 \\

Naturals & 1,260 & 199 \\

Integers & 1,225 & 151 \\

Core glue and automation & 317 & 0 \\
\hline
\textbf{Total} & \textbf{8,324} & \textbf{706} \\
\hline
\end{tabular}
\end{table}

The artifact is a self-contained Lean~4 library layered on top of Mathlib's ZFC model consisting of {8,324} lines of code (LoC), organized in a few modules summarized in \cref{tab:artifacts-loc}, and containing 706 proved theorems.
The relational calculus is the largest component, representing about one third of the codebase.

\section{Related Work}
\label{sec:related}
We position our contributions with respect to set-theoretic developments in interactive provers such as Isabelle and Rocq, to Mathlib's ZFC model, and to categorical/set-theoretic bases.

\paragraph*{Isabelle/ZF and ZFC in Isabelle/HOL}
Isabelle includes a classical first‑order logic (FOL) object logic with Zermelo–Fraenkel set theory as an object‑level theory, commonly referred to as Isabelle/ZF~\cite{paulson-zf}.
In this approach, set‑theoretic reasoning happens inside the ZF object logic, with membership and extensionality primitives and large portions of mathematics developed directly at the set level.
While powerful, ZF lives as a separate object logic from Isabelle's higher‑order logic (HOL).
This separation historically delivered deep set‑theoretic developments (ordinals, constructibility, etc.), but makes it less convenient to reuse HOL automation, type classes, and libraries in mixed developments.

A complementary line encodes ZFC \emph{within} Isabelle/HOL by introducing a HOL type of sets together with an elementhood relation and basic operations, with an emphasis on close integration with HOL's automation and type classes~\cite{paulson-zf-hol}.
In this setting, sets are first‑class HOL values; proofs benefit from standard HOL tooling and can interoperate smoothly with typed libraries.
Conceptually, our Lean framework plays a similar bridging role: we keep ZFC‑level objects and proofs, but we provide explicit bridges to Lean's native types so that tactics, typeclasses, and algebraic infrastructure can be reused where they pay off.
Our distinct contribution is a \emph{rewriting‑friendly relational calculus} tailored to set‑level developments, whereas the HOL encodings typically emphasize a minimal new notation surface and type‑class integration.

\paragraph*{Rocq (ZFC, IZF)}
Rocq has multiple set‑theoretic developments with different aims. One family (e.g. \texttt{coq‑zfc}) encodes ZFC à la Aczel with primitive membership and extensional equality, and then \emph{proves} the ZFC axioms as theorems of the Rocq calculus~\cite{aczel-set-theory}.
Another strand (e.g. \texttt{coq-izf}, constructive or intuitionistic set theories) targets constructive metatheory or models of type theory, sometimes via pointed‑graph semantics~\cite{miquel-pts-set-theory} or Tarski–Grothendieck universes~\cite{barras-set-theory}.
These developments showcase that strong set‑level mathematics is possible inside a dependent type theory, but integration with Rocq's typed algebraic hierarchy and automation usually remains more ad‑hoc.

\paragraph*{ETCS and categorical foundations}
Lawvere's Elementary Theory of the Category of Sets (ETCS) axiomatizes the (well‑pointed) category of sets with finite limits, cartesian closure, a natural numbers object, and (typically) choice; equality is extensional at the \emph{morphism} level and ``up to isomorphism'' reasoning is built in from the start~\cite{lawvere-etcs}.
ETCS is weaker than ZFC and is rather a metatheoretic object than a framework for set‑level development.
In principle though, ETCS could be pursued in Lean via category‑theoretic libraries.

\medskip
Overall, our Lean framework shares with the above works the goal of enabling set‑level mathematics inside a typed proof assistant, namely Lean, with smooth interoperability with typed libraries and automation.

\section{Conclusion}
\label{sec:conclusion}
We advocated a practical way to develop mathematics in ZFC inside Lean~4 by engineering a relational calculus, establishing universal properties with transport along isomorphisms, and building bridges to native libraries.
The resulting layer reduces friction when reasoning extensionally and remains compatible with typed developments.
Future work includes extending the library with more canonical set-theoretic constructions, such as building reals via Cauchy sequences or Dedekind cuts (or both), and improving automation to further reduce boilerplate in proofs.
One might push the framework further and develop more mathematics inside it, for instance basic analysis and related results.

\paragraph*{Declaration of AI Use}
We declare that no generative AI tools or large language models (LLMs) were used in the development of the tool, the Lean implementation, or the writing of this paper.

\bibliography{ref}

\end{document}